\newtheorem{theorem}{Theorem}
\newtheorem{definition}{Definition}
\newcommand{\revzero}[1]{{\color{black} #1}}
\newcommand{\revone}[1]{{\color{black} #1}}
\newcommand{\revral}[1]{{\color{black} #1}}
\newcommand{\revfin}[1]{{\color{black} #1}}
\title{\Large\bf
Distributed Online Planning for Min-Max Problems \\ in Networked Markov Games
}
\author{Alexandros E. Tzikas$^{1}$, Jinkyoo Park$^{2}$, Mykel J. Kochenderfer$^{1}$, and Ross E. Allen$^{3}$%
\thanks{$^{1}$A. E. Tzikas (corresponding author) and M. J. Kochenderfer are with the Department of Aeronautics and Astronautics, Stanford University, Stanford, CA 94305, U.S.A.
        {\tt\small \{alextzik, mykel\}@stanford.edu}}%
\thanks{$^{2}$J. Park is with the Department of Industrial and Systems Engineering, KAIST,
Yuseong-gu, Daejeon 305-701, Republic of Korea.
{\tt\small jinkyoo.park@kaist.ac.kr}}%
\thanks{$^{3}$R. E. Allen is with the MIT Lincoln Laboratory, Lexington, MA 02421-6426, U.S.A.
{\tt\small ross.allen@ll.mit.edu}}%
\thanks{Accepted to appear in the IEEE Robotics and Automation Letters.}
\thanks{© 2024 IEEE.  Personal use of this material is permitted.  Permission from IEEE must be obtained for all other uses, in any current or future media, including reprinting/republishing this material for advertising or promotional purposes, creating new collective works, for resale or redistribution to servers or lists, or reuse of any copyrighted component of this work in other works.}
}
\begin{document}

\maketitle
\thispagestyle{empty}
\pagestyle{empty}

\begin{abstract}
    Min-max problems 
    are 
    \revone{important} in multi-agent sequential decision-making because they improve the performance of the worst-performing agent in the network.
    However, solving the \revfin{multi-agent} min-max problem is challenging. We propose a modular, distributed, online planning-based algorithm that is able to approximate the solution of the min-max objective in networked Markov \revone{games}, assuming that the agents communicate within a network topology \revzero{and the transition and reward functions are neighborhood-dependent}. This \revral{set-up} is encountered in the multi-robot setting. Our method consists of two phases at every planning step. In the first phase, each agent obtains sample returns based on its local reward function, by performing online planning. 
    Using the samples from online planning, each agent constructs a concave approximation of its underlying  local return as a function of only the action 
    \revral{of its neighborhood}
    at the next planning step. In the second phase, the agents deploy a distributed optimization framework that converges to the optimal immediate next action for each agent, based on the function approximations 
    of the first phase. 
    We demonstrate our algorithm's performance through formation control \revone{simulations.}
\end{abstract}

\begin{keywords}
distributed robot systems, networked robots, planning under uncertainty, Markov games, min-max optimization
\end{keywords}
\section{Introduction}

\PARstart{M}{ulti}-agent reinforcement learning (MARL) involves multiple independent agents that operate in a common environment, each \revfin{aiming} to optimize a long-term reward by interacting with the environment and the other agents \cite{zhang2021multi}. It has recently witnessed widespread success in many problems, such as the game of Go \cite{Go}, autonomous driving \cite{driving}, and simulated soccer \cite{soccer}.

MARL techniques can be applied to a large class of multi-agent decision problems. In this work, we will focus on the class of Markov games.
The goal of each agent in the Markov game is to determine an optimal policy 
that \revral{optimizes} its expected cumulative reward.

The Markov game can be extended to the distributed paradigm. In this setting, we assume that the agents communicate within a network topology. The agents can exchange information only with their neighbors, as defined by a communication graph.
This paradigm is critical in applications where there is no central controller communicating with all agents. The absence of a central controller \revral{offers various benefits}: i) the system does not have a single point of failure, ii) \revral{the distributed computation allows for deployment when the agents have limited communication capabilities} or the number of agents is large. Usually, the goal of the agents in this setting is to maximize the expected cumulative team-average reward,
by determining their optimal policies under the imposed communication restrictions \cite{zhang2021multi}. 

Although other learning objectives have been explored in the literature for Markov games in the distributed setting \cite{zhang2021multi}, to the best of the authors' knowledge, limited prior work \cite{zhang2014fairness} has investigated the problem where the network of agents aims to jointly maximize the worst-performing agent's expected cumulative reward (termed the max-min or min-max problem from here onward)\revral{ \cite[pg. 19 \& 24]{zhang2021multi}.}
This is an important problem when fairness is required in the system. 
\revral{Most prior works consider a shared objective among the agents \cite{lauer2000algorithm}, a team-average reward \cite{kar2012qd}, and a zero-sum set-up of agent objectives \cite{littman1994markov}. }

Various notions of fairness have been proposed in the literature, but in this work we focus on the notion of improving the performance of the worst-performing agent, which is known as the egalitarian objective \cite{zimmer2021learning}, and has \revral{applicability in multi-robot systems \cite{kishimoto2008optimized}}.
Our contributions are as follows. 
\begin{itemize}
    \item We propose a distributed algorithm \revral{for the sequential decision-making problem of a multi-agent system with the egalitarian objective. Our algorithm consists of an online planning and a distributed optimization module, and solves the max-min Markov game problem under assumptions. We are able to combine online planning and distributed optimization by creating convex function approximations of the expected cumulative costs of the agents at each timestep}. We assume the reward of each agent is dependent on the actions of its neighborhood and the portion of the state relevant to its neighborhood. We also assume that the transition probability for the portion of the state relevant to its neighborhood is only dependent on its neighborhood quantities. 
    \item  We demonstrate the performance of our algorithm \revfin{on} the formation control application. \revral{We pose the problem in min-max form and solve it through disciplined convex programming \cite{grant2006disciplined}.} Our algorithm outperforms other baselines in this problem and its performance is comparable to the optimal action sequence.
\end{itemize}

This paper is organized as follows. In section \revral{II}, we describe the prior related work. In section \revral{III}, we include the problem statement, while in section \revral{IV} we provide a qualitative description of our method. In section \revral{V} we describe our algorithm and in section \revral{VI} we test its performance in simulation. We conclude the paper in section \revral{VII}.

\section{Related Work}
\subsection{Solving the Multi-Agent Min-Max Problem}
Prior work has attempted to address the min-max multi-agent sequential decision-making problem, but there are limitations to the existing methods. 
A decentralized deep learning-based model to extract policies in this setting has been proposed \cite{zimmer2021learning}. The learned policy of every agent is only a function of the agent's current observation, which means that inter-agent communication is not utilized in the decision-making step to allow for more informed decisions. The algorithm presented in this prior work also uses a minibatch of data at every timestep, which requires the simulation of the whole system. This is not always possible in a networked multi-agent system, because the agents can only communicate with their neighbors. In addition, the authors studied the problem with the assumption that the reward of each agent is not dependent on the actions of other agents: at each agent an advantage function, which is only a function of the agent's quantities, is used. A variant of the problem has also been studied in the centralized setting, where the objective is to minimize a weighted sum of the worst agent cumulative cost and the cumulative cost of all agents \cite{zhang2014fairness}. This approach is limited: it requires a centralized controller and only solves a problem related to the min-max problem.

In the case of static distributed optimization, there exist algorithms that solve the static \revfin{multi-agent} max-min problem \cite{distr_opt}. It has also been shown that the one-step min-max problem is equivalent, under specific conditions, to a min-sum optimization problem \cite{distr_opt}. Such problems are solvable using many distributed optimization algorithms \cite{boyd2011distributed, shorinwa2023distributed1, shorinwa2023distributed2}.

\subsection{Online Planning Methods}

In single-agent Markov decision problems, online methods are often used \cite{kochenderfer2022algorithms}
when the optimal policy of the underlying problem is computationally expensive (or even intractable) to determine offline. Online methods choose the agent's next action based on reasoning about states reachable from the current state and the future cumulative rewards that can potentially be received. Many variants of online planning exist in the literature. An algorithm called partially observable Monte Carlo planning with observation widening
(POMCPOW) can handle continuous action and observation spaces by using weighted particle filtering in the Monte Carlo tree search (MCTS) \cite{sunberg2018online}. 

Solving the Markov game analytically is difficult, even as a centralized problem. A major challenge is the computational complexity. 
Online planning has been explored in the context of centralized MARL with a cooperative objective. The idea of factored-value MCTS has been combined with the max-plus algorithm to obtain an anytime online planning algorithm that is computationally efficient \revfin{because} it takes advantage of the factored structure of the value function \cite{choudhury2021scalable}. 

When we want to solve the Markov game in a distributed manner, the problem \revfin{difficulty increases}. An iterative, distributed solution method is then required that learns the optimal policies as the agents continue to interact with the environment. \textit{By integrating MCTS into our multi-agent algorithm for the max-min problem, we obtain an anytime \revfin{method} that can provide action suggestions at the current timestep, without the need to solve the entire problem. }

Decentralized online planning algorithms have been proposed in the context of the Markov game \cite{best2019dec, czechowski2020decentralized}. Czechowski and Oliehoek \cite{czechowski2020decentralized} introduce a decentralized online planning approach 
for the Markov game in the cooperative case, where each agent shares a common reward. 
Their approach is based on MCTS and allows only one agent to improve its policy at a time, using learned models for the other agents' policies. 
The problem we are considering differs from this application in three significant ways: our problem objective is not equivalent to the collaborative objective explored in this prior work, we do not require cyclic communication among the agents for the action (policy) selection, and we do not require simulations with the current joint policy in order to model the agents' behavior. In addition, a decentralized planning method based on MCTS has been proposed for scenarios where each agent makes sequential decisions in order to optimize a global objective that depends on the action sequences of all agents \cite{best2019dec}. This algorithm cannot solve our problem of interest because it assumes that the global objective is known to all agents. In the max-min problem we are considering, each agent only knows its local reward function.

\section{Problem Statement}
In our formulation, we consider the framework of the Markov game, which is defined below.
\begin{definition}\label{MG}
A Markov game is defined by the tuple 
\begin{equation}
    \left( \mathcal{N}, \mathcal{S}, \lbrace \mathcal{A}^i \rbrace_{i \in \mathcal{N}}, P, 
\lbrace R^{i}\rbrace_{i \in \mathcal{N}}, \gamma \right),
\end{equation} where 
$\mathcal{N} = \lbrace 1, \dots, N \rbrace$ 
is the set agents, 
$\mathcal{S}$ is the state space observed by all agents, $\mathcal{A}^i$ is the action space of agent$~i$. Let $\mathcal{A} = \times_{i \in \mathcal{N}} \mathcal{A}^i$ be the set of joint agent actions, \revral{where $\times$ denotes the Cartesian product}. Then $P: \mathcal{S} \times \mathcal{A} \rightarrow \Delta(\mathcal{S})$ defines \revral{the transition probabilities by \revfin{mapping} each state-joint action pair to a distribution on the state space, which belongs to the set $\Delta (\mathcal{S})$}. The reward function $R^i : \mathcal{S} \times \mathcal{A} \times \mathcal{S} \rightarrow \mathbb{R}$  determines the immediate reward received by agent$~i$ for a transition from $(\mathbf{s}, \mathbf{a})$ to $\mathbf{s}'$. The discount factor is $\gamma \in [ 0, 1)$.
\end{definition}

Our objective is to maximize the worst-performing agent's expected cumulative reward, under constrained communication at every timestep. The constrained communication is imposed by an undirected graph $\mathcal{G} = \left( \mathcal{N}, \mathcal{E} \right)$, where $\mathcal{E}$ is the set of communication links, i.e., $(i, j) \in \mathcal{E}$ if and only if node $i$ can communicate directly with node $j$. We assume that at every timestep agent $i$ can send information to agent $j$ if and only if agent $j$ can send information to agent $i$, i.e. $(i, j) \in \mathcal{E} \iff (j, i) \in \mathcal{E}$. The neighborhood of agent $i$ is denoted $\mathcal{N}_i = \lbrace j \mid (i, j) \in \mathcal{E}\rbrace \cup \lbrace i \rbrace$. The environment and the interaction of the agents with the environment are expressed through the tuple \revral{$\left( \mathcal{N}, \mathcal{S}, \lbrace \mathcal{A}^i \rbrace_{i \in \mathcal{N}}, P, 
\lbrace R^{i}\rbrace_{i \in \mathcal{N}}, \gamma \right)$}. Out of the entities in the tuple, we assume that \revral{$\mathcal{A},\ P, $ and $\gamma$} are known to all agents, but $R^i$ is only known to agent $i$. We further impose the restriction:
\begin{equation} \label{eq:reward}
    R^i(\mathbf{s}, \mathbf{a}, \mathbf{s}') = R^i(\mathbf{s}^{\mathcal{N}_i}, \mathbf{a}^{\mathcal{N}_i}, \mathbf{s'}^{\mathcal{N}_i}),\ \forall i \in \mathcal{N}.
\end{equation}
\revzero{Eq. (\ref{eq:reward}) states that the reward of agent $i$ is only a function of the actions of the agents in the neighborhood of agent $i$, $\mathbf{a}^{\mathcal{N}_i} \in \mathcal{A}^{\mathcal{N}_i}=\times_{i \in \mathcal{N}_i} \mathcal{A}^i$, and of the portion of the state which is relevant to the neighborhood of agent $i$, $\mathbf{s}^{\mathcal{N}_i}$ and $\mathbf{s'}^{\mathcal{N}_i}$}. $\mathbf{a}^{\mathcal{N}_i}$ and $\mathbf{s}^{\mathcal{N}_i}$ are subvectors of the vectors $\mathbf{a}$ and $\mathbf{s}$, respectively. Eq. \eqref{eq:reward} is an extension to the common assumption that the reward of agent $i$ is dependent only on the actions of its neighbors \cite{yi2021learning}. We note that nevertheless, every agent has access to the complete current state $\mathbf{s}$, because we do not deal with partial observability in this work.

\revzero{We also assume that
\begin{equation}\label{eq:P_struct}
    P(\mathbf{s'}^{\mathcal{N}_i} \mid \mathbf{s}, \mathbf{a}) =  P(\mathbf{s'}^{\mathcal{N}_i} \mid \mathbf{s}^{\mathcal{N}_i}, \mathbf{a}^{\mathcal{N}_i}).
\end{equation}
The portion of the state relevant to the neighborhood of agent $i$ is conditionally dependent only on the current timestep's quantities of the neighborhood. This can be encountered in multi-robot systems where the state is the concatenation of the state of each robot and the robots have independent dynamics.
}

We finally introduce a new entity for every agent $i$, $I_t^{\mathcal{N}_i} \in \mathcal{I}_t^{\mathcal{N}_i}$, where $\mathcal{I}_t^{\mathcal{N}_i}$ is the set of possible pieces of information available  to agent $i$ from its neighborhood at timestep $t$. \revral{It could for example include the action sequence of agent $i$'s neighbors up to the previous timestep or incorporate knowledge about the neighbors' neighborhoods from previous timesteps, which are propagated to agent $i$ through the network}. Note that $I_t^{\mathcal{N}_i} \in \mathcal{I}_t^{\mathcal{N}_i}$ can only depend on information available at or before timestep $t$. Assuming an initial state $\mathbf{s}_0$, the expected cumulative reward of agent $i$ is:
\begin{multline}\label{eq:cum_reward}
    \tilde{R}^i(\mathbf{s}_0) = \mathbb{E} \Bigg [ \sum_{t \geq 0} \gamma^t R^i(\mathbf{s}_t, \mathbf{a}_t, \mathbf{s}_{t+1}) \mid \\
    a_t^i \sim \pi^i(\cdot \mid \mathbf{s}_t, I_t^{\mathcal{N}_i}), \mathbf{s}_0\Bigg ],
\end{multline}
where the subscript $t$ refers to the timestep and $\pi^i(\cdot \mid \cdot)$ denotes the policy of agent $i$.
The optimization problem that we aim to solve \revfin{in a distributed manner} is  formulated as follows:
\begin{equation}\label{eq:main_problem}
\begin{aligned}
\min_{\pi^i(\cdot \mid \cdot),\ \forall i \in \mathcal{N}} \quad & \max_{i \in \mathcal{N}}\ { -\tilde{R}^i(\mathbf{s}_0) }\\
\textrm{s.t.} \quad & \pi^i: \mathcal{S} \times \mathcal{I}_t^{\mathcal{N}_i} \rightarrow \Delta(\mathcal{A}^i),\ \forall i \in \mathcal{N}.\\
\end{aligned}
\end{equation}
We will call problem (\ref{eq:main_problem}) the main problem (min-max or max-min interchangeably) from now on. 
We can impose additional restrictions on the policies, such as requiring that they be deterministic.

\section{Motivating the Proposed Approach}
\revral{We highlight the challenges in solving the multi-agent min-max problem using online planning by showing how it differs from the single-agent case. With this analysis, our approach to solving the problem is motivated.}
\subsection{Online Planning Module}
This section highlights the differences between performing online planning in the single-agent min-max problem and the multi-agent min-max case. In the single-agent setting, the min-max criterion simply becomes the min criterion. \revral{The subsection on the single-agent case describes already known facts from the literature. The subsection on the multi-agent case motivates our novel algorithm by following a parallel structure to the previous subsection. It is an informal description of the reasoning behind our proposed algorithm.}
\subsubsection{\revral{The Single-Agent Case}}
Much of the success of online planning methods in single-agent reinforcement learning can be attributed to the Bellman equation \cite{kochenderfer2022algorithms}, which pertains to the underlying framework, i.e., the Markov decision process (MDP). An MDP is the degenerate case of problem (\ref{eq:main_problem}) for $\mathcal{N}=\lbrace 1 \rbrace$. According to the Bellman equation, the optimal solution to an MDP satisfies the principle of dynamic programming.  Also, if an optimal policy exists for an MDP, a deterministic optimal policy exists. Namely, an optimal action $a_\mathbf{s}$, when in state $\mathbf{s}$, in order to maximize the expected cumulative reward, is
\begin{multline} \label{eq: bellman_optimality}
    a_\mathbf{s} = \textrm{argmax}_{a \in \mathcal{A}^1}\ Q^*(\mathbf{s},a) = \\
    \textrm{argmax}_{a \in \mathcal{A}^1} \  R(\mathbf{s},a) + \gamma \sum_{\mathbf{s}' \in \mathcal{S}} P(\mathbf{s}' \mid \mathbf{s}, a) V^*(\mathbf{s}'),
\end{multline}
where $V^*(\mathbf{s}')$ denotes the optimal cumulative reward when starting at state $\mathbf{s}'$ and following an optimal policy for the MDP, $Q^*(\mathbf{s},a)$ denotes the return when taking action $a$ at $\mathbf{s}$ and then following an optimal policy, and $R(\mathbf{s},a)$ is the expected reward when transitioning from $\mathbf{s}$ with action $a$. 
The solution to an instance of the MDP is a sequence of actions (obtained via an optimal policy) that maximizes the expected cumulative reward. 

The online planning methods attempt to obtain an action sequence with a return realization close to the optimal one. They work as follows. 
At the first timestep, we assume that the state is $\mathbf{s}_0$. Then, \revfin{as a surrogate to eq. (\ref{eq: bellman_optimality})}, the online planning method creates a search tree with root node $\mathbf{s}_0$ and first-level branches pertaining to $L$ sampled actions $\{\bar{a}^{(1)}, \ldots, \bar{a}^{(L)}\}$ that are to be taken at $\mathbf{s}_0$. We assume that for these, the online planning method is able to obtain a good approximation of the value $Q^*(\mathbf{s}_0,\bar{a}^{(l)})$, which we denote $\tilde{Q}_{\mathbf{s}_0,\bar{a}^{(l)}}$, by simulating future action trajectories. Then, the algorithm chooses as $a_{\mathbf{s}_0}$ (the action that will actually be taken at $\mathbf{s}_0$) the action $\bar{a}^{(\textrm{argmax}_{1 \leq l \leq L}\ \tilde{Q}_{\mathbf{s}_0,\bar{a}^{(l)}})}$. The agent performs $a_{\mathbf{s}_0}$, transitioning to state $\mathbf{s}_1$ at the next timestep, and receiving reward $r_1$. At this time, the goal remains to optimize the cumulative reward since the first timestep. Hence, the agent should find $a_{\mathbf{s}_1}$ according to
\begin{equation}
    a_{\mathbf{s}_1} = \textrm{argmax}_{a \in \mathcal{A}^1}\ r_1 + \gamma {Q}^*(\mathbf{s}_1, a)= \textrm{argmax}_{a \in \mathcal{A}^1}\ {Q}^*(\mathbf{s}_1, a),
\end{equation}
and we see that we arrive at the same optimization problem we would get if we directly applied Bellman's equation at $\mathbf{s}_1$, \revfin{i.e., 
the past sequence of actions and rewards is irrelevant to the current action selection}.
\revfin{Hence, the same online planning procedure as above is followed for every timestep.} 

The main question is how can the online planning method obtain good estimates of $Q^*(\mathbf{s},a)$, which are denoted $\Tilde{Q}_{\mathbf{s},a}$, when $Q^*(\mathbf{s},a)$ internally assumes the agent follows the optimal policy. Online methods, such as MCTS, update the value or candidacy for optimality of an action $a$ at a given state $\mathbf{s}$ in the search tree using sampled trajectory paths and their cumulative rewards starting at that state-action pair. The value of $a$ at $\mathbf{s}$ is determined by the empirical average return of paths starting at this state-action pair, when at later states in the path, $\mathbf{s}'$, the action to be followed is chosen using the upper confidence bound (UCB1) exploration heuristic \cite{kochenderfer2022algorithms}. UCB1 uses the approximate $Q$ values, $\Tilde{Q}_{\mathbf{s}', \cdot}$, computed so far in the search tree for node $\mathbf{s}'$, and selects as next action the one that maximizes partly the $\Tilde{Q}_{\mathbf{s}', \cdot}$ value and partly an exploration bonus. Based on the received \revfin{returns, MCTS} updates the estimate $\Tilde{Q}_{\mathbf{s}, a}$. This indicates that the value of an action at state node $\mathbf{s}$ is determined with respect to a proxy to the best policy (since maximizing actions are taken in the path). Therefore, the online planning method can be seen as an approximate policy iteration algorithm \cite{kochenderfer2022algorithms}, where approximate $Q$ values are computed based on simulated trajectories (proxy of policy evaluation), and then actions are chosen in internal levels based on the UCB1 metric (which is a proxy for policy improvement), to finally find a good estimate of $Q^*$. 
Hence, single-agent online planning methods produce trajectory samples using action sequences that follow the basic rule of optimality for MDPs, eq. (\ref{eq: bellman_optimality}), which in turn follows the principle of dynamic programming. This explains why $\tilde{Q}_{\mathbf{s}, \bar{a}^{(l)}}$ is a good approximation of $Q^*(\mathbf{s}, a^{(l)})$ with the UCB1 metric.

\subsubsection{The Multi-Agent Case}
\label{sec:online_planning_multi}
In order to solve the main problem (\ref{eq:main_problem}) online, we follow the reasoning in the previous subsection. The goal is to maximize the worst agent's expected cumulative reward since the beginning of time. At the first timestep, the state is $\mathbf{s}_0$ and we use online planning to create the functions $\hat{Q}_0^i(\mathbf{s}_0, \mathbf{a}_0^{\mathcal{N}_i})$ at every agent $i$. These functions are to be good approximations of the local return functions at the optimal solution of problem (\ref{eq:main_problem}). They should express the local cumulative return, assuming that the agents follow the optimal policies of problem (\ref{eq:main_problem}) after timestep $0$ and at timestep $0$ take $\mathbf{a}_0$. Similarly, functions $\hat{Q}_t^i(\mathbf{s}_t, \mathbf{a}_t^{\mathcal{N}_i})$ should express the local cumulative return starting at $t$, assuming that the agents follow the optimal policies of problem (\ref{eq:main_problem}) before and after timestep $t$, and at timestep $t$ take $\mathbf{a}_t$. Then, the problem
\begin{equation}\label{eq:surrogate_1}
\max_{\mathbf{a}_0 \in \times_{i \in \mathcal{N}} \mathcal{A}^i} \quad \min_{i \in \mathcal{N}}\ {\hat{Q}_0^i(\mathbf{s}_0, \mathbf{a}_0^{\mathcal{N}_i}) }
\end{equation}
serves as an approximation of problem (\ref{eq:main_problem}) at the initial timestep. The agents solve problem (\ref{eq:surrogate_1}) in a distributed manner and take the action computed as the maximizer. Then each agent $i$ receives a local reward $r_1^i$ and the state becomes $\mathbf{s}_1$ at timestep 1. At this timestep, the agents will again apply online planning to continue solving the original problem (\ref{eq:main_problem}). They now create new function approximations $\gamma \hat{Q}_1^i(\mathbf{s}_1, \mathbf{a}_1^{\mathcal{N}_i})+r_1^i$, which should express the local cumulative rewards assuming that the agents follow the optimal policies of problem (\ref{eq:main_problem}) before and after timestep $1$ and at timestep $1$ take $\mathbf{a}_1$. The problem
\begin{equation}\label{eq:surrogate_2}
\max_{\mathbf{a}_1 \in \times_{i \in \mathcal{N}} \mathcal{A}^i} \quad \min_{i \in \mathcal{N}}\ {r_1^i+ \gamma \hat{Q}_1^i(\mathbf{s}_1, \mathbf{a}_1^{\mathcal{N}_i}) }
\end{equation}
serves as the new proxy of problem (\ref{eq:main_problem}) in timestep $1$ and is again solved using a distributed algorithm. The same process continues at every timestep. 

For this method to work, the online planning approach needs to obtain function approximations $\hat{Q}_t^i(\cdot, \cdot)$, which are good approximations of the returns of the agents when following the optimal policies of problem (\ref{eq:main_problem}). In our algorithm, we assume that, at every timestep $t$, each agent $i$ creates a search tree for its own local reward, \revfin{with root node $\mathbf{s}_t$}, which is used to compute $\hat{Q}_t^i(\mathbf{s}_t, \mathbf{a}_t^{\mathcal{N}_i})$. The function $\hat{Q}_t^i(\mathbf{s}_t, \mathbf{a}_t^{\mathcal{N}_i})$ is a concave function approximation of the values $\tilde{Q}_{t, \mathbf{s}_t, \bar{\mathbf{a}}^{\mathcal{N}_i, (l)}}^i$ for the sampled neighborhood action tuples $\bar{\mathbf{a}}^{\mathcal{N}_i, (l)}$ at the first level of the tree search of agent $i$ at timestep $t$, which has root node $\mathbf{s}_t$. In this case, using the UCB1 criterion for the expansion of the search tree of each agent need not be the approach that provides the best approximation of performance, $\hat{Q}_t^i(\mathbf{s}_t, \mathbf{a}_t^{\mathcal{N}_i})$, for the optimal policy of problem (\ref{eq:main_problem}). \revone{The reasons for this are twofold: the principle of dynamic programming does not necessarily hold for the solution of problem (\ref{eq:main_problem}), as shown in the following theorem, and the UCB1 in the search tree is performed only with respect to the local reward}. Nevertheless, UCB1 is still used and future work will investigate other criteria.

\begin{theorem}
\label{th:dyn_prog}
    For a problem of the form (\ref{eq:main_problem}), the principle of dynamic programming does not necessarily hold.
\end{theorem}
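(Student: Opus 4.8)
The plan is to prove the statement by exhibiting a single instance of problem \eqref{eq:main_problem} for which the principle of dynamic programming fails; this suffices, since the claim is only that it \emph{need not} hold. First I would make precise what the principle means here. Adapting Bellman's principle of optimality (recalled for the single-agent case in Section~IV, where it is phrased as the past sequence of actions and rewards being irrelevant to the current action selection), it would require that if $\pi^\star = (\pi^{\star,1},\dots,\pi^{\star,N})$ is optimal for \eqref{eq:main_problem} with initial state $\mathbf{s}_0$, then for every reachable state $\mathbf{s}_t$ the restriction of $\pi^\star$ to the subtree rooted at $\mathbf{s}_t$ is itself optimal for the min-max problem of the same form \eqref{eq:main_problem} but with initial state $\mathbf{s}_t$. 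The counterexample will contradict exactly this: the rewards already accrued change which agent is the worst performer, hence change the optimal action at a reachable state.

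Concretely, I would take $N=2$ with $\mathcal{G}$ the single edge $(1,2)$, so that $\mathcal{N}_1 = \mathcal{N}_2 = \{1,2\}$ and assumptions \eqref{eq:reward} and \eqref{eq:P_struct} hold trivially. I would use deterministic transitions and rewards (so that the expectation in \eqref{eq:cum_reward} is vacuous) and a two-stage episode followed by an absorbing zero-reward state. At the root $\mathbf{s}_0$ a single joint action is available; taking it yields per-agent rewards $r_0^1 = -c$ and $r_0^2 = 0$ for a constant $c > 10$ (agent $1$ starts far behind) and a deterministic transition to $\mathbf{s}_1$. At $\mathbf{s}_1$ two joint actions $a,b$ are available: $a$ yields $(r_1^1,r_1^2) = (10,1)$ and $b$ yields $(r_1^1,r_1^2)=(2,9)$, after which the process is absorbing. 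I would then (i) solve the standalone problem of the form \eqref{eq:main_problem} with initial state $\mathbf{s}_1$: its objective value is $\max\{\min(10,1),\min(2,9)\} = \max\{1,2\}$, so its unique optimizer takes $b$ at $\mathbf{s}_1$; and (ii) solve the full problem from $\mathbf{s}_0$, where $\tilde R^i(\mathbf{s}_0) = r_0^i + \gamma r_1^i$, so the objective is $\max\{\min(-c+10\gamma,\gamma),\min(-c+2\gamma,9\gamma)\} = \max\{-c+10\gamma,-c+2\gamma\}$, whose unique optimizer (for every $\gamma \in (0,1)$) takes $a$ at $\mathbf{s}_1$. Since the restriction to the subtree rooted at $\mathbf{s}_1$ of the optimal policy of the full problem takes $a$, while the optimal policy of the subproblem rooted at $\mathbf{s}_1$ takes $b$, the tail of the global optimum is not optimal for the tail problem, and the principle of dynamic programming fails.

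The main obstacle is conceptual rather than computational — the arithmetic above is elementary — and lies in stating the principle in a way that is both faithful to the single-agent Bellman principle of Section~IV and unambiguously contradicted by the example. The observation I would emphasize is that the min-max value is not a function of the current state $\mathbf{s}_t$ alone: it depends on the vector of per-agent cumulative rewards accrued so far, since this vector determines the identity of the bottleneck agent. A genuine dynamic-programming decomposition would therefore require augmenting the state with the accrued-reward vector, and it is precisely this coupling across time and across agents — absent in the MDP case and in team-average objectives — that breaks optimal substructure. I would close by remarking that this is exactly why the surrogate subproblems \eqref{eq:surrogate_1}--\eqref{eq:surrogate_2} retain the running reward terms $r_t^i$, and why a per-state search tree expanded with the UCB1 heuristic can only be a heuristic here.
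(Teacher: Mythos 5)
Your proof is correct and follows essentially the same strategy as the paper's: an explicit two-agent, two-timestep counterexample in which the reward accrued at the first stage changes which agent is the bottleneck, so the action that is optimal for the min-max subproblem rooted at the intermediate state is strictly suboptimal for the full problem. The only cosmetic difference is that the paper keeps the action sets state-independent and assigns rewards to every joint action (as Definition~\ref{MG} requires), which you can replicate by making the root-state reward and transition action-independent and letting the second-stage reward depend on only one agent's action; this does not affect the argument.
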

\begin{proof}
    The proof is based on a simple counter-example, where a sequence of actions that are non-optimal for their respective min-max subproblems, constitute an optimal action sequence for the min-max problem involving all timesteps. The details of the proof are included in the appendix.
\end{proof}

Overall, \revral{there are two significant challenges in applying online planning to solve the multi-agent min-max problem that do not appear in the single-agent case. }In the multi-agent min-max scenario it is important to keep in memory the cumulative reward received up to the current timestep for every agent. This is because the dynamic programming principle does not hold. In addition, while decisions can be made with $\Tilde{Q}_{\cdot, \cdot}$ values for sampled actions in the single agent setting, we require a function approximation $\hat{Q}_t^i(\cdot, \cdot)$ in the multi-agent scenario, \revral{because of the different trajectories sampled by each agent performing online planning.}

\subsection{Distributed Optimization Module}
As mentioned in the previous section, when solving the general version of problem (\ref{eq:main_problem}), each agent performs online planning to compute an approximation \revfin{of} its local cumulative reward and then the agents determine their current actions by solving the approximate max-min problem at the current timestep. This is done by deploying a distributed optimization framework. In our algorithm, we choose to use the method presented by Srivastava et al. \cite{distr_opt}. This is a distributed stochastic subgradient method. 
We decide to use a subgradient method, because it is generally applicable: 
any convex function has a nonempty and bounded subdifferential in the interior of its domain. The distributed optimization framework by Srivastava et al. \cite{distr_opt} solves the problem
\begin{equation}\label{eq:distr_opt}
\min_{\mathbf{a} \in \times_{i \in \mathcal{N}} \mathcal{A}^i} \quad \max_{i \in \mathcal{N}}\ { f^i(\mathbf{a}}),
\end{equation}
where each $f^i$ is convex and known only to agent $i$. The closed and convex set $\times_{i \in \mathcal{N}} \mathcal{A}^i$ ($\mathbf{a} \in \times_{i \in \mathcal{N}} \mathcal{A}^i$) is known to all agents. We now look at the method from the perspective of agent $i$. The estimate of the optimal point and optimal value of problem (\ref{eq:distr_opt}) of agent $i$, at iteration $k$, are denoted $\boldsymbol{\alpha}_k^i$ and $\eta_k^i$, respectively. The method proceeds with two steps at every iteration $k \geq 0$. At first, it forms an intermittent adjustment, and then agent $i$ takes a step towards minimizing its own function $f^i$. This is written:
\begin{equation}
    \begin{bmatrix}
        \tilde{\boldsymbol{\alpha}}^i_k \\
        \tilde{\eta}^i_k
    \end{bmatrix}=\sum_{j \in \mathcal{N}_i} w^{ij}_k  \begin{bmatrix}
\boldsymbol{\alpha}_k^i\\
\eta_k^i
\end{bmatrix},\     \mathbf{v}_k^i =\begin{bmatrix}
        \tilde{\boldsymbol{\alpha}}^i_k \\
        \tilde{\eta}^i_k
    \end{bmatrix} - \dfrac{\beta_k}{N} \begin{bmatrix}
\mathbf{0}_n\\
1
\end{bmatrix},
\end{equation}
\begin{equation}\label{eq:distr_opt_end}
       \begin{bmatrix}
{\boldsymbol{\alpha}}_{k+1}^i\\
{\eta}_{k+1}^i
\end{bmatrix} = \Pi_{\mathcal{A} \times \mathbb{R}} \left[ \mathbf{v}_k^i - \beta_kr^i \left(g^i + \begin{bmatrix}
\epsilon_k^i\\
0
\end{bmatrix} \right)\right].
\end{equation}
In the equations above, $w^{ij}_k$ denotes the
weight that agent $i$ assigns to its neighbor $j$ at timestep $k$, $\beta_k$ is the step 
size, $\epsilon_k^i$ the random subgradient error, $r^i > 1$, $g^i$ a subgradient of $\tilde{g}^i
(\boldsymbol{\alpha}, \eta) = \tilde{g}^i
(\mathbf{z}) = \mathrm{max}\lbrace0, f^i(\boldsymbol{\alpha})-\eta\rbrace$ at $\mathbf{v}_k^i$, and $n$ denotes the dimensionality of $\mathbf{a} \in \times_{i \in \mathcal{N}} \mathcal{A}^i$. Here, $\Pi_{\mathcal{A}\times \mathbb{R}}$ denotes the Euclidean projection on the set $\mathcal{A}\times \mathbb{R}$.

Under noise, subgradient, communication topology, and assigned weights assumptions, it is guaranteed that the iterates $\mathbf{z}_k^i = (\boldsymbol{\alpha}_k^i,\ \eta_k^i)$ converge to a common $\mathbf{z}^*=(\boldsymbol{\alpha}^*,\ \eta^*)$ such that $\boldsymbol{\alpha}^*,\ \eta^*$ are an optimal point and the optimal value of problem (\ref{eq:distr_opt}), respectively \cite{distr_opt}.

\section{Proposed Algorithm}
Our \revral{proposed} algorithm is Algorithm \ref{alg:proposed}. At every planning step, it passes through two phases. In the first phase, each agent $i$ creates, through online planning, \revral{a convex} function approximation of its local expected cumulative \revral{cost}, which is a function of only the immediate next action of the agents in its neighborhood. \revone{In the current instance of the algorithm, \revral{the online planning module uses the POMCPOW algorithm \cite{sunberg2018online} (with the observation equaling the state)}. The convex function approximator used is the adaptive max-affine partitioning algorithm (AMAP) \cite{balazs2016convex}, which offers a balance between the regression speed and model quality.} The steps mentioned above correspond to lines 5--8 in Algorithm \ref{alg:proposed}. \revral{In the second phase, the subgradient-based distributed optimization algorithm \cite{distr_opt} is deployed to solve the min-max problem involving the convex function approximations from the first phase and determine the agents' next action (lines 9--12).} In Algorithm \ref{alg:proposed}, we give our proposed algorithm from the perspective of agent $i$. Because of properties (\ref{eq:reward}-\ref{eq:P_struct}), agent $i$ only needs to create a search tree with actions of its neighbors and the portion of the state relative to its neighborhood.

\algdef{SE}[SUBALG]{Indent}{EndIndent}{}{\algorithmicend\ }%
\algtext*{Indent}
\algtext*{EndIndent}
\begin{algorithm}
\caption{Proposed algorithm from the perspective of agent $i$ at timestep $t$}
\label{alg:proposed}
	\begin{algorithmic}[1]
        \State \textbf{Inputs:}
            \Indent 
            \State planning timestep $t$
            \State current environment state $\mathbf{s}_t$
            \State cumulative reward up to timestep $t-1$: $\bar{R}^i_{t-1}$
            \EndIndent
            \State obtain $L$ samples of future expected cumulative reward from state $\mathbf{s}_t^{\mathcal{N}_i}$: using online planning (e.g., MCTS), sample an action tuple $\mathbf{a}^{\mathcal{N}_i, (l)}$ and compute $\Tilde{Q}^i_{t, \mathbf{s}^{\mathcal{N}_i}_t, \mathbf{a}^{\mathcal{N}_i, (l)}},\ l \in 1, \dots, L$
            \State negate the sampled cumulative rewards to get future cumulative cost samples $-\Tilde{Q}^i_{t, \mathbf{s}^{\mathcal{N}_i}_t, \mathbf{a}^{\mathcal{N}_i, (l)}}$
            \State create a convex function approximation $-\hat{Q}^i_t(\mathbf{s}_t^{\mathcal{N}_i}, \mathbf{a}^{\mathcal{N}_i})$ based on the samples $-\Tilde{Q}^i_{t, \mathbf{s}^{\mathcal{N}_i}_t, \mathbf{a}^{\mathcal{N}_i, (l)}}$
            \State add $-\bar{R}^i_{t-1}$ to $-\gamma^t \hat{Q}^i_t(\mathbf{s}_t^{\mathcal{N}_i}, \mathbf{a}^{\mathcal{N}_i})$ and get $f^i_t(\mathbf{s}_t^{\mathcal{N}_i}, \mathbf{a}^{\mathcal{N}_i})$

            \State initialize the estimates $\boldsymbol{\alpha}^i_0 \in \mathcal{A}$, $\eta_0^i \in \mathbb{R}$ 
            \For {$k = 0, \dots, K$}
                
                \State $\boldsymbol{\alpha}^i_{k+1}, \eta_{k+1}^i$ $\leftarrow$ one step of algorithm (\ref{eq:distr_opt_end})
            \EndFor
            \State perform the action pertaining to agent $i$ in $\boldsymbol{\alpha}_{K+1}^i$
            \State collect immediate reward $r_t^i$ and $\bar{R}_{t}^i \leftarrow \bar{R}_{t-1}^i +\gamma^t r_t^i$
\end{algorithmic}
\end{algorithm}

\revral{The number of samples $L$ controls the accuracy of the cost function approximation. The number of iterations $K$ determines the consensus among the agent iterates and their distance to the minimum of the optimization problem. The agent iterates determine the next action for the agents. }

\subsection{Computational Complexity}
\revral{
The per-agent computational complexity of the proposed algorithm is completely determined by the three submodules included in Algorithm \ref{alg:proposed}: the online planning module (POMCPOW) in line 5, the convex function approximator \cite{balazs2016convex} in line 7, and the distributed optimization framework in lines 10--11. In Algorithm \ref{alg:proposed}, POMCPOW uses $L$ planning calls. With a max tree depth $\Upsilon$, it has complexity $\mathcal{O}(L \Upsilon \log L)$ \cite{sunberg2018online}. The AMAP model improvement step has complexity $\mathcal{O}(\max \lbrace \phi_{\mathrm{LSPA}}, D_i\rbrace \max \lbrace H, D_i\rbrace D_i L)$, when $H$ hyperplanes are used, $D_i$ is the argument dimensionality of $-\hat{Q}^i_t(\mathbf{s}_t^{\mathcal{N}_i}, \cdot)$, and $\phi_{\mathrm{LSPA}}$ denotes the number of iterations of a sub-algorithm \cite{balazs2016convex}. Assuming an ensemble of $M$ models and multiple model improvements, the AMAP total complexity is $\mathcal{O}(L^{D_i/D_i+4}M\max \lbrace \phi_{\mathrm{LSPA}}, D_i\rbrace \max \lbrace H, D_i\rbrace D_i L + M L D_i^2)$. The second term comes from the initial least-squares problems.  Finally $K$ iterations of the distributed optimization algorithm are performed, with the complexity of each dominated by the complexity of eq. (12). The complexity of eq. (12) can greatly vary depending on the structure of $\mathcal{A}$, but certain cases allow for analytic solutions \cite{boyd2004convex}. Therefore we suppose a generic complexity for eq. (12), $\mathcal{O}(\Pi)$. In total, our algorithm's complexity is the sum of the above complexities. We however note that the method used for each of the submodules of Algorithm \ref{alg:proposed} can be independently replaced, e.g., online planning can be performed with an algorithm other than POMCPOW.

}

\section{Experiments}
\revone{\revral{In order to evaluate the performance of our algorithm, we pose the formation control problem as a min-max problem. In this setting, the agents aim to converge to states such that desired relative states are satisfied between them} \cite{mesbahi2010graph}.\footnote{The source code for the experiments can be found at: \url{https://github.com/alextzik/distr_online_maxmin_markov_game}.}}
\revone{\subsection{Experiment Setup}}
We consider a Markov game where the state consists of the $x, y$ positions of the agents, i.e., $\mathbf{s} = (s^1, \dots, s^N) \in \mathcal{S} = \times_{i \in \mathcal{N}} \mathbb{R}^2 $, where $s^i = (x^i, y^i)$ the position of agent $i$. The dynamics of each agent $i$, from timestep $t$ to $t+1$, with action $a_{t}^i$, evolve according to
\begin{equation}
s_{t+1}^i = 
    s_{t}^i + 
    \begin{bmatrix}
    1 & 0\\
    -1 & 2
    \end{bmatrix}a_{t}^i,\ a_t^i = (a, b),\ a, b \in \left[ 0, 1\right].
\end{equation}
We assume deterministic dynamics, in order to allow comparisons with various techniques.
The instantaneous reward for agent $i$ is 
\begin{equation}
    R^i(\mathbf{s}^{\mathcal{N}_i}, \mathbf{a}^{\mathcal{N}_i}, 
    \mathbf{s'}^{\mathcal{N}_i}) = \sum_{j \neq l \in \mathcal{N}_i} \lVert (s^j-s^l) - (d^j-d^l)\rVert,
\end{equation}
where $d^j$ denotes the desired position of agent $j$ in a specific formation that satisfies the relative states. Our goal is to  solve problem (\ref{eq:main_problem}), which also leads the agents to satisfy their desired relative states. A reward close to $0$ signifies satisfaction of the desired relative states. We assume $\gamma=1$ and a finite horizon of $T=150$.

\revral{We consider three graph topologies $\mathcal{G}_1, \mathcal{G}_2$, and $\mathcal{G}_3$. In $\mathcal{G}_1$, five agent communicate over an almost fully connected graph (only agent 1 is not a neighbor of agent 5). In $\mathcal{G}_2$, five agent have the topology 1-2-3-4-5-1. In $\mathcal{G}_3$, eight agents communicate over the topology 1-2-3-4-5-6-7-8.}

\subsection{Implementation Details for the Proposed Algorithm}
Our proposed approach has the following parameters in Algorithm \ref{alg:proposed}: $L=100$, and $K=1000$. For the online framework we deploy POMCPOW \cite{sunberg2018online}, since both our action and state spaces are continuous. \revone{Every agent $i$ has access to the positions of its neighbors, $\mathbf{s}^{\mathcal{N}_i}$}. Internally in POMCPOW of agent $i$, the observation equals the next simulated state for the agents in $\mathcal{N}_i$. For the POMCPOW implementation at each agent, we use the action and observation progressive widening values of $k_a=2.0$, $\alpha_a = 0.5$,  $k_o=0.0$, and $\alpha_o = 0.5$ in Listing (1) of \cite{sunberg2018online}. It also holds that the $n$ of Listing (1) in \cite{sunberg2018online} equals $L$ in our case. The search depth is $5$. The rollout planner chosen for our algorithm is based on the affine formation control system
\begin{equation}\label{eq:linear_formation_control}
    \dfrac{ds^i(t)}{dt} = -\sum_{j \in \mathcal{N}_i} \left( s^i(t) - s^j(t) \right) - \left(d^i-d^j \right),\ \forall i,
\end{equation}
for which it is guaranteed that $s^i(t) \rightarrow d^i+\xi$ as $t \rightarrow \infty$, where $\xi$ is a constant displacement vector \cite{mesbahi2010graph}. Therefore, if the rollout policy is called at a given depth in the search tree of an agent, it first solves 
the system of differential equations (\ref{eq:linear_formation_control}) with the state of the current node in the search tree as the initial condition and then chooses for the agents the action that takes them closer to the converged states of eq. (\ref{eq:linear_formation_control}), in the sense of a one-step look-ahead, which is an easily solvable convex problem. Note that the rollout policy is used within the POMCPOW of each agent $r$. Hence, in POMCPOW of agent $r$, eq. (\ref{eq:linear_formation_control}) only involves the agents $i$ such that $i \in \mathcal{N}_r$, and agent $r$ can assume a fully connected topology in the rollout planner. After simulating the actions for the agents, the search moves to a new node, if the maximum depth has not been reached. Note that different initializations of eq. (\ref{eq:linear_formation_control}) lead to different converged positions, which means that simply using the rollout planner should not solve our problem. In addition, the converged state for agent $i$ can be different in the individual systems solved by the agents $j \in \mathcal{N}_i$.

\revone{\subsection{Baseline Algorithm Description}}
Using the rollout planner at every timestep with depth $1$ constitutes a simple baseline. The agents solve eq. (\ref{eq:linear_formation_control}), for the true graph topology and involving every $i \in \mathcal{N}$, in a distributed manner and then each agent takes the action that in the one step look-ahead notion brings it closer to its converged state for eq. (\ref{eq:linear_formation_control}). This is the `rollout baseline' below. \revral{Assuming Euler discretization with step $\Delta t$ and total time $T_f$, each timestep of the rollout baseline at agent $i$ is $\mathcal{O}(\frac{T_f}{\Delta t}\lvert \mathcal{N}_i\rvert)$.}
The second baseline is the `POMCPOW baseline', where each agent uses Algorithm \ref{alg:proposed} up to line 5, \revral{with complexity $\mathcal{O}(L \Upsilon \log L)$}. It then follows the action pertaining to itself from the sample, $\Tilde{Q}^i_{t, \mathbf{s}^{\mathcal{N}_i}_t, \mathbf{a}^{\mathcal{N}_i, (l)}}$ with the maximum value. \revral{Obviously, the proposed algorithm is the most computationally intensive among the compared methods, but with significantly better min-max behavior, as shown below.}

In the scenario that we are considering, we can obtain an upper bound on performance, by finding the optimal point and value of problem (\ref{eq:main_problem}).
Since the proposed problem is deterministic, the following centralized 
convex problem is solved by the optimal open-loop sequence of actions for the agents 
\begin{equation}\label{eq:cvx_highbound}
    \min_{\mathbf{a}^i_0, \dots, \mathbf{a}^i_{T-1},\ \mathbf{s}^i_0, \dots,
    \mathbf{s}^i_{T}\forall\ i \in \mathcal{N}} 
    \quad  \max_{i \in \mathcal{N}}\ 
    {-\sum_{t=0}^{T-1} 
    {R}^i(\mathbf{s}^{\mathcal{N}_i}_t, 
    \mathbf{a}^{\mathcal{N}_i}_t, 
    \mathbf{s}^{\mathcal{N}_i}_{t+1}) }.
\end{equation}
Problem (\ref{eq:cvx_highbound}) serves as the main comparison to our work. It is referred to `optimal' below. We have chosen this simplified application in order to make the upper bound of performance tractable.

\revone{\subsection{Performance Evaluation}}
We present results for the instantaneous reward of the worst-performing agent \revral{(in terms of return, which is the max-min objective)} as a function of planning step in \revral{Figures \ref{fig:reward_5_agent}--\ref{fig:reward_8_agent}. In Figure \ref{fig:reward_5_agent}, five agents communicate over $\mathcal{G}_1$ for all timesteps. In Figure \ref{fig:reward_5_agent_switching}, five agents communicate over a switching network topology: every 10 timesteps, the topology switches between $\mathcal{G}_1$ and $\mathcal{G}_2$. In Figure \ref{fig:reward_8_agent}, eight agents communicate over the fixed topology $\mathcal{G}_3$.}

Our proposed algorithm performs better than the baselines by a wide margin. The POMCPOW baseline cannot achieve optimal performance, because it does not lead to coordination as explained above. The rollout baseline is also not able to converge to the desired configuration, because the converged states keep changing at the solution of eq. (\ref{eq:linear_formation_control}) at every timestep. 
\revral{The oscillatory behavior of reward for the two baselines in Figure \ref{fig:reward_5_agent_switching} is attributed to the changing topology, which alters the number of terms in the reward sum. Nevertheless, the reward within the timesteps of constant topology is not increasing, which indicates a lack of coordination for the reasons explained above.}
Our algorithm is able to reach an $\epsilon$-suboptimal configuration, but cannot exactly reach the optimal formation, because \revral{the agents' function approximations, used in the min-max optimization problem to obtain the next actions, are not minimized by the zero action at later steps}. \revral{This explains the observed oscillatory performance. We expect smaller oscillations as $L$ and tree density in POMCPOW and $H$ in AMAP increase. Our proposed algorithm is able to obtain convergence rates similar to the open-loop optimal action sequence, constrained however by the connectedness of the underlying communication topology.}

\begin{figure}
     \centering
    \includegraphics[width=0.47\textwidth]{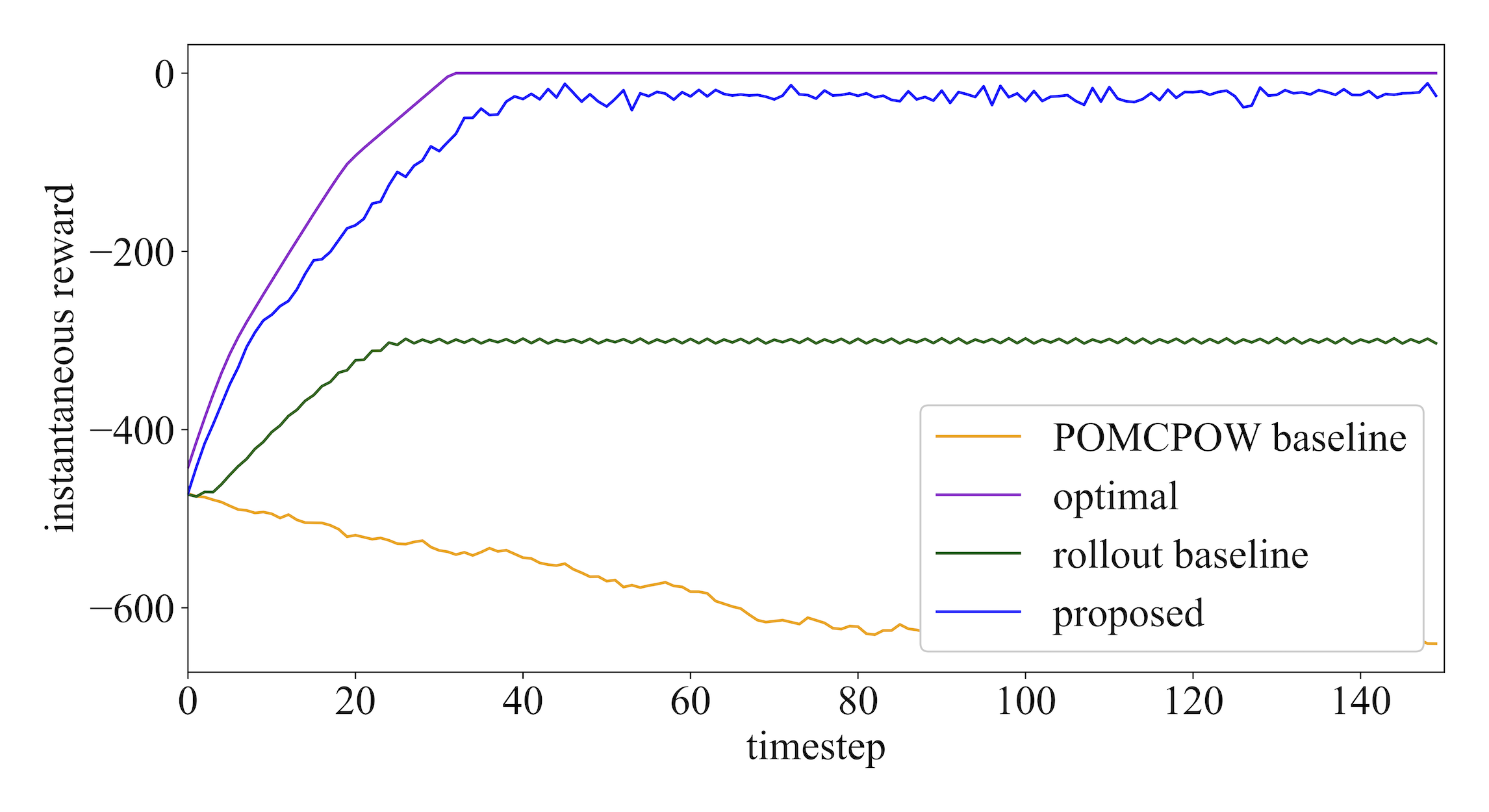}
        \caption{Instantaneous reward of the worst-performing agent for $\mathcal{G}_1$. \revral{Five} agents communicate over an almost fully connected communication network (only agent \revral{1} is not a neighbor of agent \revral{5}). We observe that our proposed method performs better than the baselines on the max-min criterion.}
        \label{fig:reward_5_agent}
\end{figure}

\begin{figure}
     \centering
    \includegraphics[width=0.45\textwidth]{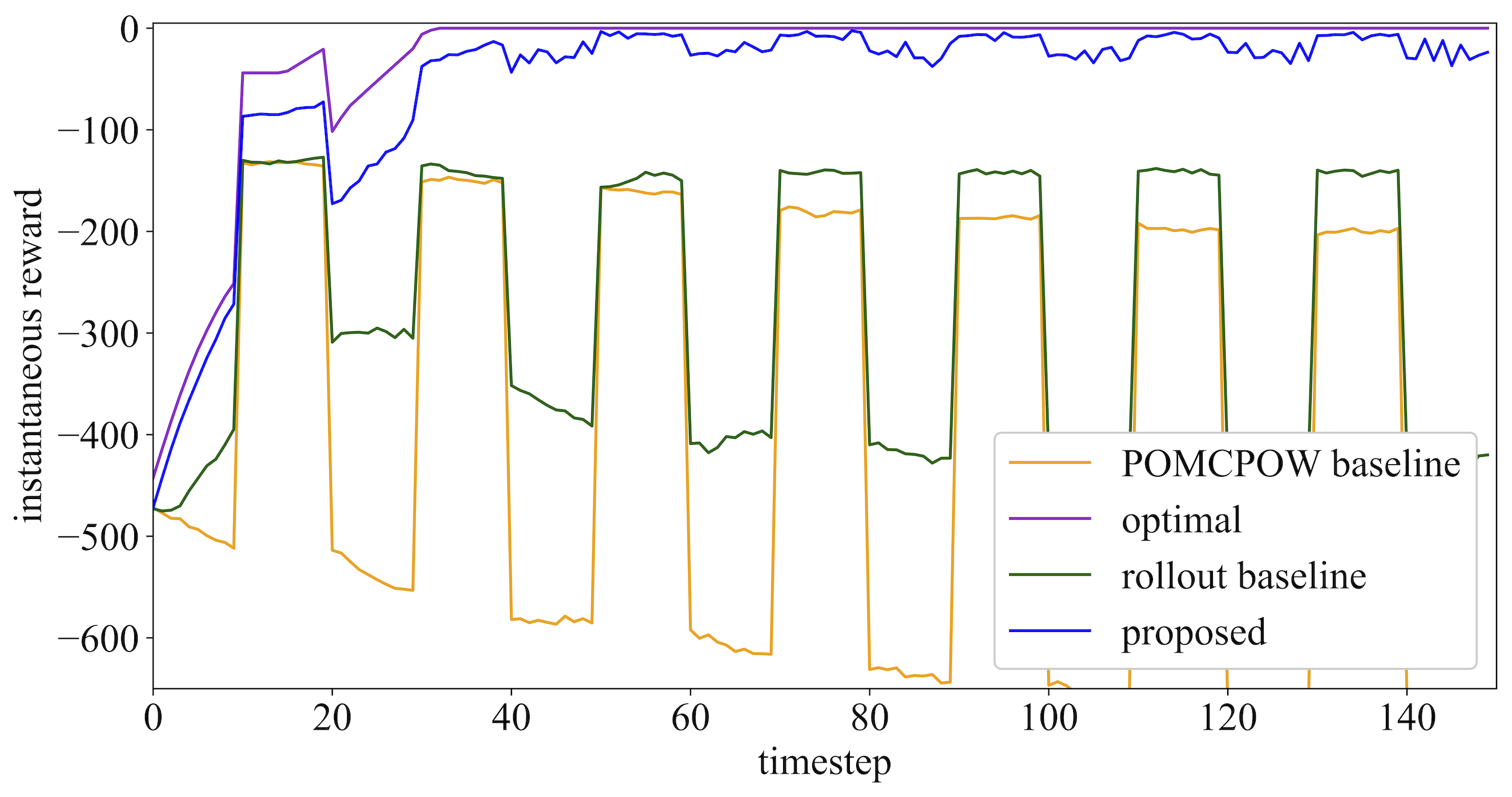}
        \caption{\revral{Instantaneous reward of the worst-performing agent for a switching topology of five agents. Every $10$ timesteps the topology changes between $\mathcal{G}_1$, an almost fully connected topology, and $\mathcal{G}_2$, which is a cyclic graph. We observe that our proposed method performs better than the baselines on the max-min criterion.}}
        \label{fig:reward_5_agent_switching}
\end{figure}

\begin{figure}
     \centering
         \centering
    \includegraphics[width=0.47\textwidth]{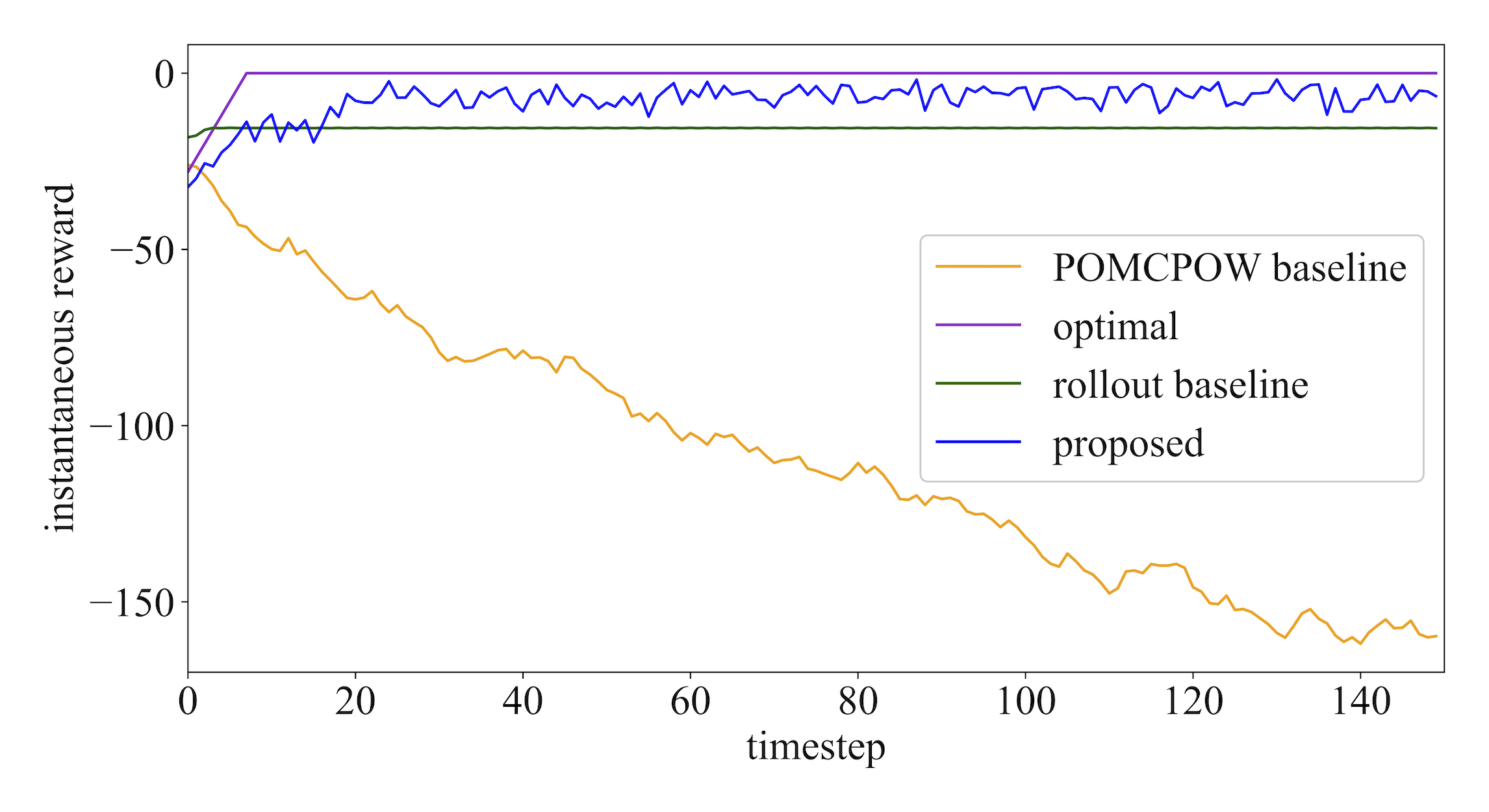}
     \caption{\revzero{Instantaneous reward of the worst-performing agent for $\mathcal{G}_3$. \revral{Eight} agents communicate over the topology \revral{1-2-3-4-5-6-7-8}. The worst agent cumulative reward in our method is \revral{-1188}, while the worst agent return in the POMCPOW baseline is \revral{-2338}. We observe that our proposed method performs better than the baselines with respect to the max-min criterion.}}
     \label{fig:reward_8_agent}
\end{figure}

\vspace{0.3cm}
\section{Limitations and Future Work}
\revral{Our approach is superior to the baselines presented, but has limitations. We observe that the online planning and function approximator modules of our algorithm lead to oscillatory behavior. Tuning the parameters and exploring ways to mitigate this would be important before deployment in actual robots. Because computation is limited in robotic applications, an efficient implementation of the submodules of Algorithm \ref{alg:proposed} is also needed. In the current implementation, the online planning framework is the bottleneck in performance.}

From a theoretical perspective, as discussed in Section \ref{sec:online_planning_multi}, the UCB1 metric need not be the best exploration metric for the min-max problem. Theoretical analysis of the reasons behind its sub-optimality, and determination of a better metric are potential avenues of research. 
Our algorithm is also only approximate. Future work will aim towards a distributed algorithm that can provably converge to the optimal policies for the agents in the context of problem (\ref{eq:main_problem}). Finally, we assume that every agent knows the complete feasible set $\mathcal{A}$ for the actions of all agents beforehand.

\section*{Appendix: Proof of Theorem \ref{th:dyn_prog}}

    Assume the case of $2$ neighboring agents interacting over the course of $2$ timesteps (final state at $t=3$). Also assume $\mathcal{I}_t^{\mathcal{N}_i}$ is the empty set and the dynamics are deterministic: $s_{t+1} = s_{t} + a_t^1 + a_t^2$. The initial state is $s_1 = 0$, $a_t^i \in \lbrace 1, 2 \rbrace$, and $\gamma = 1$. The allowed policies are deterministic. 

    At timestep $2$ we can only have $s_2 \in \lbrace 2, 3, 4 \rbrace$. If:
    \begin{align}
        \min_{i \in \lbrace 1, 2\rbrace} &R^i(2, (1,1)) = \min \lbrace 90, 100  \rbrace  >\\
        &\min_{i \in \lbrace 1, 2\rbrace} R^i(2, \mathbf{a}),\ \forall \mathbf{a} \neq (1,1),\\
        &R^1(2, \mathbf{a}) = 200,  R^2(2, \mathbf{a}) = 50, \forall \mathbf{a} \neq (1,1),
    \end{align}
    then the optimal (with respect to the max-min criterion) policy for each agent in the subproblem starting at state $2$ is action $1$. Similarly, assume:
    \begin{align}
\min_{i \in \lbrace 1, 2\rbrace} &R^i(3, (1,2)) = \min \lbrace 90, 100 \rbrace >\\
&\min_{i \in \lbrace 1, 2\rbrace} R^i(3, \mathbf{a}),\ \forall \mathbf{a} \neq (1,2),\\
&R^1(3, \mathbf{a}) = 200,  R^2(3, \mathbf{a}) = 50, \forall \mathbf{a} \neq (1,2)
    \end{align}
    and
    \begin{align}
\min_{i \in \lbrace 1, 2\rbrace} &R^i(4, (2,1)) = \min \lbrace 90, 100 \rbrace >\\
&\min_{i \in \lbrace 1, 2\rbrace} R^i(4, \mathbf{a}),\ \forall \mathbf{a} \neq (2, 1),\\
&R^1(4, \mathbf{a}) = 200,  R^2(4, \mathbf{a}) = 50, \forall \mathbf{a} \neq (2,1).
    \end{align}
    Now, assume $R^1(0, \mathbf{a}) = 5$, $R^2(0, \mathbf{a}) = 200$ for all $\mathbf{a}$. Then, to solve the min-max problem (\ref{eq:main_problem}), the agents should choose any action when in state $s_1=0$ and for the state $s_2$ the system arrives at, they should choose any action other than the one optimal for the min-max subproblem starting at that state. In other words, suboptimal policies for the subproblems that could arise in timestep $2$ have better (greater) overall max-min objective value in the complete problem than the optimal policies for these subproblems. This completes the proof.

\section*{Acknowledgments}
This material is based upon work supported by the Under Secretary of Defense for Research and Engineering under Air Force Contract No. FA8702-15-D-0001. Any opinions, findings, conclusions or recommendations expressed in this material are those of the authors and do not necessarily reflect the views of the Under Secretary of Defense for Research and Engineering. 

\revral{Toyota Research Institute (TRI) also provided funds to assist the authors with this research, but this article solely reflects the opinions and conclusions of its authors and not TRI or any other Toyota entity. }

For the first author, this work is also
partially funded through the Alexander S. Onassis Foundation Scholarship program.

\bibliographystyle{IEEEtran}
\bibliography{IEEEabrv, sample}

\end{document}